\newtheorem{theorem}{Theorem}[section]
\newtheorem{proposition}[theorem]{Proposition}
\newtheorem{corollary}[theorem]{Corollary} 
\theoremstyle{definition}
\newtheorem{definition}[theorem]{Definition}
\theoremstyle{remark}
\newtheorem{remark}[theorem]{Remark}
\newtheorem{example}[theorem]{Example}
\DeclareMathOperator{\res}{res}
\DeclareMathOperator{\Sym}{Sym}
\newcommand*{\permcomb}[4][0mu]{{{}^{#3}\mkern#1#2_{#4}}}
\newcommand*{\perm}[1][-3mu]{\permcomb[#1]{P}}
\begin{document}

\title[Compositions of Minimal Reaction Systems   ]{Compositions of Functions and Permutations Specified by Minimal Reaction Systems}
\author{Wen Chean Teh}
\address{School of Mathematical Sciences\\
Universiti Sains Malaysia\\
11800 USM, Malaysia}
\email{dasmenteh@usm.my}

\begin{abstract}
This paper studies mathematical properties of reaction systems that was introduced by Enrenfeucht and Rozenberg as computational models inspired by biochemical reaction in the living cells. In particular, we continue the study on the generative power of functions specified by minimal reaction systems under composition initiated by Salomaa. Allowing degenerate reaction systems, 
functions specified by minimal reaction systems over a quarternary alphabet that are permutations generate the alternating group on the power set of the background set.
\end{abstract}

\maketitle
\section{Introduction}

Reaction systems, introduced by Ehrenfeucht and Rozenberg \cite{ehrenfeucht2007reaction}, is a natural computing approach ultimately aiming to understand the functioning of the living cells.
Studies on reaction systems have been on various diverse lines, on the basic framework, as well as on various extensions. For motivational surveys on reaction systems, we refer the reader to \cite{ehrenfeucht2012qualitative, ehrenfeucht2012reaction}.

This work belongs to the line of research that focus on the mathematical properties of functions specified by reaction systems (for example, \cite{ehrenfeucht2011functions, manzoni2014simple, salomaa2015applications, salomaa2012functions, teh2017irreducible, teh2018irreducible}), exclusively those specified by minimal reaction systems \cite{ehrenfeucht2012minimal, salomaa2013minimal, salomaa2014compositions, salomaa2014minimal, salomaa2015two, salomaa2017minimal, azimi2017steady, teh2017minimal}. In \cite{salomaa2014compositions} Salomaa showed that the set of functions specified by nondegenerate (see Definition~\ref{2606a}) minimal reaction systems over a ternary alphabet each of which permute the nonempty proper subsets of the background set is closed under composition. This implies that nondegenerate minimal reaction systems have limited generative power because not every function specified by a reaction system can be generated from them. This paper revisits Salamaa's results while allowing degenerate reaction systems and further extends them to the quarternary alphabet.

 Section~2 provides the basic terminology for reaction systems and describes our representation of functions that are specified by reaction systems used in our computer simulation. The subsequent section revisits some of Salomaa's work in \cite{salomaa2014compositions} from different perspective and further extends his results to the quarternary alphabet. Our final section shows that functions specified by minimal reaction systems exhibit different mathematical behaviour when degeneracy is allowed. In particular, for the ternary alphabet,  three functions specified by degenerate minimal reaction systems possess full generative power under composition.

\section{Preliminaries}

\subsection*{Basic Notions of Reaction Systems}\label{1203a}

If $S$ is an arbitrary finite set, then the cardinality of $S$ is denoted by $\vert S\vert$ and the power set of $S$ is denoted by $2^S$.

\begin{definition}
	Suppose $S$ is a finite nonempty set. A \emph{reaction in $S$} is  a triple
	$a=(R_a,I_a, P_a)$, where $R_a$ and $I_a$ are disjoint (possibly empty) subsets of $S$ and $P_a$ is a nonempty subset of $S$. The sets $R_a$, $I_a$, and $P_a$ are the \emph{reactant set}, \emph{inhibitor set}, and \emph{product set} respectively.
\end{definition}

\begin{definition}\label{2606a}
	A \emph{reaction system (over $S$)} is a pair $\mathcal{A}=(S,A)$ where $S$ is a finite nonempty \emph{background set} and $A$ is a (possibly empty) set of reactions in $S$. We say that $\mathcal{A}$ is \emph{nondegenerate} if{f} $R_a$ and $I_a$ are both nonempty for every $a\in A$.
\end{definition}

From now onwards, $S$ is a fixed finite nonempty background set.

\begin{definition}
	Suppose \mbox{$\mathcal{A}=(S,A)$} is a reaction system. The function 
	$\res_{\mathcal{A}} \colon 2^S\rightarrow 2^S$  is defined by
	$$\res_{\mathcal{A}}(X)= \bigcup_{\substack{a\in A\\
			R_a\subseteq X, I_a\cap X=\emptyset} } P_a \quad,\quad \text{for all }X\subseteq S.   $$
\end{definition}

We may identify $\res_A$ with $\res_{\mathcal{A}}$ when $S$ is understood.

\begin{definition}
	Every function $f\colon 2^S\rightarrow 2^S$ is called an \emph{rs function over $S$}.
It is \emph{nondegenerate} if{f} $f(\emptyset)=f(S)=\emptyset$. We say that $f$ can be \emph{specified by a reaction system} $\mathcal{A}$ over $S$ if{f} $f=\res_{\mathcal{A}}$.
\end{definition}

In \cite{ehrenfeucht2012minimal} nondegeneracy of reaction systems is an adopted assumption. Hence, over there 
every rs function specified by a reaction system over $S$ is nondegenerate. However, 
unless stated explicitly otherwise, a reaction system can be degenerate in this work. Since every rs function over $S$ can be canonically specified by a unique maximally inhibited\footnote{A reaction system $(S,A)$ is \emph{maximally inhibited} if{f} 
	$I_a= S\backslash R_a$ for all $a\in A$.}
	reaction system over $S$, it follows that the class of rs functions over $S$ is exactly the class of functions specified by reaction systems over $S$.

\begin{definition}\cite{ehrenfeucht2012minimal}
	Suppose $f$ is an rs function over $S$. 
	\begin{enumerate}
		\item $f$ is \emph{union-subadditive} if{f} $f(X\cup Y)\subseteq f(X)\cup f(Y)$ for all $X,Y\subseteq S$.
		\item $f$ is \emph{intersection-subadditive} if{f} $f(X\cap Y)\subseteq f(X)\cup f(Y)$ for all $X,Y\subseteq S$.
	\end{enumerate}
\end{definition}

\begin{definition}\cite{ehrenfeucht2012minimal, teh2017minimal} \label{1507d}
	Suppose $\mathcal{A}=(S,A)$ is reaction system.
Then  $\mathcal{A}$ is \emph{minimal} if{f} $\vert R_a\vert\leq 1$
		and $\vert I_a\vert\leq 1$ for every reaction $a\in A$.
	\end{definition}

\begin{theorem} \cite{ehrenfeucht2012minimal, teh2017minimal}    \label{2305b}
	Suppose $f$ is an rs function over $S$. 
Then $f=\res_\mathcal{A}$ for some (possibly degenerate) minimal reaction system  $\mathcal{A}$ if and only if $f$ is both union-subadditive and intersection-subadditive.
\end{theorem}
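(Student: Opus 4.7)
There are two directions. The forward direction is a routine structural check: for a minimal reaction $a$, the enabling set $\{X : R_a \subseteq X, I_a \cap X = \emptyset\}$ is the interval $[R_a, S \setminus I_a]$, and the bounds $|R_a|, |I_a| \leq 1$ make both membership conditions distribute across set operations. If $a$ fires on $X \cup Y$, then $R_a \subseteq X \cup Y$ with $|R_a| \leq 1$ forces $R_a \subseteq X$ or $R_a \subseteq Y$, while $I_a \cap (X \cup Y) = \emptyset$ automatically gives $I_a$ disjoint from both; the intersection case is symmetric, using $|I_a| \leq 1$. Summing the contributions $P_a$ over $a \in A$ transports this to the required subadditivity properties of $\res_\mathcal{A}$.

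For the converse, assume $f$ is union- and intersection-subadditive. The plan is to construct $\mathcal{A}$ by producing, for every $s \in S$ and every $X \in T_s := \{Y \subseteq S : s \in f(Y)\}$, a minimal reaction with product $\{s\}$ whose enabling interval contains $X$ and lies inside $T_s$. By contraposition the two subadditivity hypotheses mean exactly that $T_s^c$ is closed under both union and intersection, i.e., $T_s^c$ is a sublattice of $2^S$.

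Fix $s$ and $X \in T_s$. If $T_s^c = \emptyset$ use $(\emptyset, \emptyset, \{s\})$. Otherwise let $m = \bigcap T_s^c$ and $M = \bigcup T_s^c$, both in $T_s^c$. If $X \not\subseteq M$, pick $r \in X \setminus M$ and use $(\{r\}, \emptyset, \{s\})$: its enabling set $[\{r\}, S]$ avoids $T_s^c$ because $r \notin M$ and $Z \subseteq M$ for every $Z \in T_s^c$. The case $m \not\subseteq X$ is symmetric via $(\emptyset, \{i\}, \{s\})$ for any $i \in m \setminus X$. This leaves the main case $m \subsetneq X \subsetneq M$.

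In this main case set $\alpha = \bigcup\{Z \in T_s^c : Z \subseteq X\}$ and $\beta = \bigcap\{Z \in T_s^c : Z \supseteq X\}$; both are in $T_s^c$, and since $X \notin T_s^c$ we have $\alpha \subsetneq X \subsetneq \beta$. Pick any $r \in X \setminus \alpha$ and set $Z_r = \bigcap\{Z \in T_s^c : r \in Z\} \in T_s^c$ (nonempty family, as $M$ contains $r$). Any $Z \in T_s^c$ with $Z \subseteq X$ lies inside $\alpha$ and so cannot contain $r$; hence $Z_r \not\subseteq X$ and I may pick $i \in Z_r \setminus X$. The reaction $(\{r\}, \{i\}, \{s\})$ fires on $X$, and for every $Z \in T_s^c$ with $r \in Z$ we have $Z \supseteq Z_r \ni i$, so the reaction does not fire on $Z$. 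Thus $[\{r\}, S \setminus \{i\}] \subseteq T_s$. Collecting these reactions across all $s$ and all $X \in T_s$ yields the desired minimal $\mathcal{A}$. The main obstacle is precisely this final case: the naive symmetric choice of $i$ from $\beta \setminus X$ alone can fail in the presence of $T_s^c$-elements incomparable to $X$, and the fix is to couple $i$ to $r$ via the minimal $T_s^c$-element containing $r$.
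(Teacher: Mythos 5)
Your proof is correct, but note that the paper itself contains no proof of this statement: Theorem~\ref{2305b} is imported by citation from \cite{ehrenfeucht2012minimal} (nondegenerate case) and \cite{teh2017minimal} (degenerate case allowed), so what you have produced is a self-contained argument where the paper has none. Your forward direction is the standard one and is sound: $|R_a|\le 1$ is what makes enabledness on $X\cup Y$ split into enabledness on $X$ or on $Y$, and $|I_a|\le 1$ does the same for $X\cap Y$. Your converse is a genuine construction and I checked its key steps: by contraposition, the two subadditivity properties say exactly that $T_s^c=\{Z : s\notin f(Z)\}$ is closed under pairwise union and intersection, hence (by finiteness) under arbitrary nonempty unions and intersections, so $m$, $M$, $\alpha$, and $Z_r$ all lie in $T_s^c$; the claim $Z_r\not\subseteq X$ holds because $r\in Z_r$ and every member of $T_s^c$ contained in $X$ lies in $\alpha$, which avoids $r$; and the reaction $(\{r\},\{i\},\{s\})$ with $i\in Z_r\setminus X$ fires on $X$ but on no $Z\in T_s^c$, since $r\in Z$ forces $Z_r\subseteq Z$ and hence $i\in Z$. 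Collecting one such reaction per pair $(s,X)$ with $X\in T_s$ gives $\res_{\mathcal{A}}=f$, with $f(X)\subseteq\res_{\mathcal{A}}(X)$ witnessed by the reaction attached to $(s,X)$ and the reverse inclusion by the fact that every reaction's enabling interval sits inside the appropriate $T_s$; the degenerate reactions $(\emptyset,\emptyset,\{s\})$, $(\{r\},\emptyset,\{s\})$, $(\emptyset,\{i\},\{s\})$ you use in the side cases are legitimate precisely because the theorem permits degenerate systems. Your closing remark is also accurate: choosing the inhibitor from $\beta\setminus X$ independently of $r$ can fail when $T_s^c$ has elements incomparable to $X$ (e.g.\ $T_s^c=\{\emptyset,\{a,b\},S\}$, $X=\{a,c\}$, $r=a$, $i=d$ fires on $\{a,b\}$), and coupling $i$ to $r$ through $Z_r$ is exactly the fix; this is the one subtle point in the whole argument and you handled it correctly.
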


The above characterization was obtained
originally in \cite{ehrenfeucht2012minimal} for the context of nondegenerate reaction systems. Later it was shown to remain valid even when the nondegeneracy assumption is dropped  \cite{teh2017minimal}. This partly motivates us to extend Salomaa's work in \cite{salomaa2014compositions} to account also for degenerate reaction systems.

\begin{definition}
Let $\mathcal{M}(S)$ denote the set of of rs functions over $S$ such that each can be specified by a (possibly degenerate) minimal reaction systems.	
\end{definition}

\subsection*{Representation of RS Functions}

Suppose $S=\{s_0, s_1, s_2, \dotsc, s_{n-1}\}$ is a finite set of size $n$. We can represent each subset of $S$ by a nonnegative integer less than $2^n$ through binary representation. Precisely,
the subset $\{ s_{i_0}, s_{i_1}, s_{i_2},\dotsc, s_{i_m}  \}$, where
$0\leq i_0<i_1<i_2< \dotsb < i_m \leq n-1$ is represented by
$2^{i_m}+\dotsb + 2^{i_2}+2^{i_1}+2^{i_0}$. The representating integer is bolded to make a distinction. 

Particularly, for the quarternary alphabet
$S= \{s_0, s_1, s_2, s_3\}$, the representation of its subsets are as follows.
\begin{align*}
&\emptyset   &   &\mathbf{0} &&\{s_2\}   &   &\mathbf{4}  &&\{s_3\}   &   &\mathbf{8} && \{s_2,s_3\}   &   &\mathbf{12}   \\
&\{s_0\}   &   &\mathbf{1} && \{s_0,s_2\}   &   &\mathbf{5} & &\{s_0,s_3\}   &   &\mathbf{9} && \{s_0,s_2,s_3\}   &   &\mathbf{13}   \\
&\{s_1\}   &   &\mathbf{2} && \{s_1,s_2\}   &   &\mathbf{6}&  &\{s_1,s_3\}   &   &\mathbf{10} && \{s_1,s_2,s_3\}   &   &\mathbf{14}         \\
&\{s_0,s_1\}   &   &\mathbf{3} && \{s_0,s_1,s_2\}   &   &\mathbf{7} & &\{s_0,s_1,s_3\}   &   &\mathbf{11} && S   &   &\mathbf{15}       
\end{align*}

An rs function over $S$ can thus be represented by a row vector of length $2^{\vert S\vert}$, where the $(i+1)$-th entry represents the image of $\mathbf{i}$ for $0\leq i\leq 2^{\vert S\vert}-1$.
For example, the rs function $f$ over $S=\{s_0,s_1,s_2\}$, where 
\begin{align*}
&	f(\mathbf{0} )={}\mathbf{4} , & &f(\mathbf{1} )=\mathbf{1}, & &f(\mathbf{2} )={}\mathbf{5}, && f(\mathbf{3})={}\mathbf{2},\\
&	f(\mathbf{4})={}\mathbf{7}, & &f(\mathbf{5})={}\mathbf{2}, && f(\mathbf{6})={}\mathbf{4},  && f(\mathbf{7})=\mathbf{6},
\end{align*}
can be represented by the row vector $[
\mathbf{4}\;\, \mathbf{1}\;\,\mathbf{5}\;\,\mathbf{2}\;\,\mathbf{7}\;\, \mathbf{2}\;\,\mathbf{4}\;\,\mathbf{6}]$. If $f$ permutes $2^S$, we may denote it using the usual cycle decomposition for permutations. For example, the permutation $[
\mathbf{0}\;\, \mathbf{1}\;\,\mathbf{5}\;\,\mathbf{3}\;\,\mathbf{7}\;\, \mathbf{2}\;\,\mathbf{4}\;\,\mathbf{6}]$ can be denoted by $(\mathbf{2}\;\,\mathbf{5})(
\mathbf{4}\;\, \mathbf{7}\;\,\mathbf{6})$.

Finally, set operations and relations are conveniently extended to these representations. For example, $\mathbf{3}\cap \mathbf{6}=\mathbf{2}$ as $\{s_0,s_1\}\cap \{s_1,s_2\}=\{s_1\}$
and $\mathbf{3}\subseteq \mathbf{7}$ as 
$\{s_0,s_1\}\subseteq \{s_0,s_1,s_2\}$.

\section{Extension of Salomaa's Results}\label{2306a}

To avoid trivialities, we further assume that $\vert S\vert \geq 3$.

\begin{definition}
	Suppose $\sigma \colon S\rightarrow S$. Let $f_\sigma$ be the (nondegenerate) rs function over $S$ defined by 
	\begin{itemize}
		\item $f_\sigma(\emptyset)=f_\sigma(S)=\emptyset$;
		\item $f_\sigma(X)=\sigma[X]$ for all $\emptyset \neq X\subsetneq S$,
	\end{itemize}
where $\sigma[X]=\{\,\sigma(x)\mid x\in X\,    \} $.
	Let $f_{\sigma}^c(X)=f_\sigma (S\backslash X)$ for all $X\subseteq S$.
\end{definition}	

Let $\Sym(A)$ denote the symmetric group on a set $A$.

\begin{definition}
	\begin{enumerate}
		\item $\mathcal{F}_U(S)=\{\, f_\sigma\mid \sigma\colon S\rightarrow S            \,\}$.
		\item $\mathcal{F}_U^P(S)=\{\, f_\sigma\mid \sigma\in \Sym(S)            \,\} \cup  \{\, f_\sigma^c\mid \sigma\in \Sym(S)        \,\}          $.	
	\end{enumerate}	
\end{definition}

The class $\mathcal{F}_U(S)$ was introduced in \cite{salomaa2014compositions} and shown to be closed under composition. Furthermore, every rs function in $\mathcal{F}_U(S)$ belongs to $\mathcal{M}(S)$.
The subscript $U$ is due to the fact that
$f_{\sigma}(X)=\bigcup_{x\in X} f_\sigma(\{x\})$
for every nonempty proper subset $X$ of $S$.
Meanwhile, the superscript $P$ in $\mathcal{F}_U^P(S)$
refers to permutation.

\begin{proposition}\label{1806b}
	The class $\mathcal{F}_U^P(S)$ is closed under composition.	
\end{proposition}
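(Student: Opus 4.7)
The plan is to verify directly that each of the four possible compositions of members of $\mathcal{F}_U^P(S)$ again lies in $\mathcal{F}_U^P(S)$, by reducing everything to two observations about how $f_\sigma$ and $f_\sigma^c$ behave when $\sigma$ is a permutation. First, since $\sigma\in\Sym(S)$ sends nonempty proper subsets to nonempty proper subsets, both $f_\sigma$ and $f_\sigma^c$ send the ``boundary'' elements $\emptyset$ and $S$ to $\emptyset$ and send every nonempty proper subset to a nonempty proper subset. So on the boundary any composition collapses to $\emptyset$ automatically (matching the definitions), and I only need to track what happens on $\emptyset\neq X\subsetneq S$.

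Second, for $\sigma\in\Sym(S)$ and such $X$, one has the clean identity
\[
 f_\sigma^c(X)=\sigma[S\setminus X]=S\setminus \sigma[X],
\]
because $\sigma$ is a bijection. Combining this with $f_\sigma(X)=\sigma[X]$, a short calculation on a proper nonempty $X$ yields the ``multiplication table''
\[
 f_\sigma\circ f_\tau=f_{\sigma\tau},\qquad f_\sigma\circ f_\tau^c=f_{\sigma\tau}^c,\qquad f_\sigma^c\circ f_\tau=f_{\sigma\tau}^c,\qquad f_\sigma^c\circ f_\tau^c=f_{\sigma\tau}.
\]
For instance in the last case, $f_\tau^c(X)=S\setminus\tau[X]$ is again proper nonempty, and then $f_\sigma^c(S\setminus \tau[X])=S\setminus\sigma[S\setminus\tau[X]]=\sigma[\tau[X]]=(\sigma\tau)[X]=f_{\sigma\tau}(X)$; the other three cases are analogous.

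Combining the two observations, every composition of two elements of $\mathcal{F}_U^P(S)$ equals either $f_{\sigma\tau}$ or $f_{\sigma\tau}^c$ for the composed permutation $\sigma\tau\in\Sym(S)$, which is in $\mathcal{F}_U^P(S)$. I do not anticipate any real obstacle here, as the whole argument is a direct unwinding of the definitions; the only mild subtlety is keeping track that every intermediate value remains a proper nonempty subset so that the ``proper nonempty'' branch of each definition applies, and this is guaranteed precisely because $\sigma$ and $\tau$ are bijections. The exponents in $\{\text{none},c\}$ combine by addition modulo $2$, which is a useful way to phrase the outcome and makes the closure immediate.
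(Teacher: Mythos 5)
Your proof is correct and follows essentially the same route as the paper: both reduce the claim to a direct computation of the four compositions via the key identity $S\setminus\sigma[X]=\sigma[S\setminus X]$ for $\sigma\in\Sym(S)$, checking the boundary cases $\emptyset$ and $S$ separately. The paper works out $f_\sigma^c\circ f_\tau=f_{\sigma\circ\tau}^c$ in detail and declares the rest similar, just as you work out $f_\sigma^c\circ f_\tau^c=f_{\sigma\tau}$; your ``exponents add mod $2$'' summary is a tidy reformulation of the same multiplication table.
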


\begin{proof}
	Suppose $\sigma, \tau\in \Sym(S)$. We will only show that $f_{\sigma}^c \circ f_\tau= f_{\sigma\circ \tau}^c$. Clearly, $(f_{\sigma}^c \circ f_\tau)(\emptyset)=
	(f_{\sigma}^c \circ f_\tau)(S)=\emptyset= f_{\sigma\circ \tau}^c(\emptyset)= f_{\sigma\circ \tau}^c(S)$.
	Suppose $X$ is a nonempty proper subset of $S$.
	Then 
	$$(f_{\sigma}^c \circ f_\tau)(X)= f_{\sigma}^c (\tau[X]  ) = f_{\sigma} ( S\backslash \tau[X]     ).$$
	Since $\tau\in \Sym(S)$, it follows that $S\backslash \tau[X]= \tau[ S\backslash X]$. Hence,
	$$ (f_{\sigma}^c \circ f_\tau)(X)	
=  \sigma[	\tau[ S\backslash X] ]
	=(\sigma\circ\tau)[S\backslash X]=f_{\sigma\circ \tau}^c(X).$$   
	Similarly, it can be shown that $f_{\sigma} \circ f_{\tau}^c= f_{\sigma\circ \tau}^c$ 
	and $f_{\sigma} \circ f_\tau= f_{\sigma}^c \circ f_{\tau}^c=    f_{\sigma\circ \tau}$.
\end{proof}

\begin{proposition}
	Every rs function in the class $\mathcal{F}_U^P(S)$ 
permutes the nonempty proper subsets of $S$	and
	belongs to $\mathcal{M}(S)$.
\end{proposition}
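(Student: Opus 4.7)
My plan is to split the proposition into its two assertions and handle them in turn, for both families $\{f_\sigma \mid \sigma \in \Sym(S)\}$ and $\{f_\sigma^c \mid \sigma \in \Sym(S)\}$.

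First, I would verify that every member of $\mathcal{F}_U^P(S)$ permutes the nonempty proper subsets of $S$. For $f_\sigma$ with $\sigma \in \Sym(S)$, if $\emptyset \neq X \subsetneq S$ then $\sigma[X]$ has the same cardinality as $X$, so $f_\sigma(X) = \sigma[X]$ is again a nonempty proper subset; bijectivity on this class follows from bijectivity of $\sigma$ on $S$, with inverse $f_{\sigma^{-1}}$. For $f_\sigma^c$, the map $X \mapsto S \setminus X$ is an involution on the nonempty proper subsets of $S$, and $f_\sigma^c$ is the composition of this involution with $f_\sigma$ (restricted to that class), hence also a permutation; its inverse is $f_{\sigma^{-1}}^c$.

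Next, I would show membership in $\mathcal{M}(S)$. For $f_\sigma$ with $\sigma \in \Sym(S)$, this is already recorded in the excerpt (every function in $\mathcal{F}_U(S)$ belongs to $\mathcal{M}(S)$). For $f_\sigma^c$, I would invoke Theorem~\ref{2305b} and check union- and intersection-subadditivity. The clean way is a duality argument: setting $A = S \setminus X$ and $B = S \setminus Y$, De Morgan gives $S \setminus (X \cup Y) = A \cap B$ and $S \setminus (X \cap Y) = A \cup B$, so
\[ f_\sigma^c(X \cup Y) = f_\sigma(A \cap B), \quad f_\sigma^c(X \cap Y) = f_\sigma(A \cup B), \]
while $f_\sigma^c(X) \cup f_\sigma^c(Y) = f_\sigma(A) \cup f_\sigma(B)$. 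Thus union-subadditivity of $f_\sigma^c$ is exactly intersection-subadditivity of $f_\sigma$, and vice versa; both hold because $f_\sigma \in \mathcal{M}(S)$.

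I do not anticipate any real obstacle here: the permutation assertion reduces to the bijectivity of $\sigma$ together with the fact that complementation preserves the class of nonempty proper subsets, and the membership in $\mathcal{M}(S)$ for the complementary family reduces via De Morgan to the already-known subadditivity properties of $f_\sigma$. The only small subtlety is to be mindful of the boundary values on $\emptyset$ and $S$, but since both $f_\sigma$ and $f_\sigma^c$ send these to $\emptyset$ by definition, the union- and intersection-subadditivity inequalities hold trivially whenever $X$ or $Y$ equals $\emptyset$ or $S$, so the reduction to nonempty proper subsets causes no issue.
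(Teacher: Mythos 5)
Your proof is correct, and on the one substantive step it takes a different route from the paper. The permutation assertion is handled the same way in both: bijectivity of $\sigma$ makes $f_\sigma$ permute the nonempty proper subsets, and complementation is itself an involution on that class, so $f_\sigma^c$ permutes it too (your explicit inverses $f_{\sigma^{-1}}$ and $f_{\sigma^{-1}}^c$ are consistent with the composition rules in Proposition~\ref{1806b}). The difference is in proving $f_\sigma^c \in \mathcal{M}(S)$. The paper cites Remark~\ref{1606a}: if $f=\res_A$ with $\mathcal{A}$ minimal, then $f^c=\res_{A'}$ where $A'$ swaps each reactant set with its inhibitor set, which is again minimal --- a syntactic, constructive closure argument that also yields an explicit minimal reaction system specifying $f_\sigma^c$. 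You instead argue semantically: by Theorem~\ref{2305b}, $f_\sigma\in\mathcal{M}(S)$ means $f_\sigma$ is union- and intersection-subadditive, and your De Morgan computation shows union-subadditivity of $f_\sigma^c$ is exactly intersection-subadditivity of $f_\sigma$ and vice versa, so Theorem~\ref{2305b} applies again. This is essentially the alternative the paper mentions only parenthetically, and it is the same duality the paper itself uses later inside the proof of Theorem~\ref{0106a}. Your route has one structural advantage: the paper's proof forward-references Remark~\ref{1606a}, which is stated only after this proposition, whereas your argument is self-contained given Theorem~\ref{2305b}. What the paper's route buys in exchange is a reusable, more general closure fact (complementation, pre- and post-composition with $f_\sigma$ all preserve $\mathcal{M}(S)$) that it needs again later. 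Your care with the boundary cases $\emptyset$ and $S$ is fine but not actually needed, since the duality holds for all pairs of subsets once $f_\sigma(\emptyset)=f_\sigma(S)=\emptyset$.
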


\begin{proof}
Suppose $\sigma\in \Sym(S)$. Since $\sigma$ is  a permutation, it follows that $f_\sigma$   permutes the nonempty proper subsets;
thus so is $f^c_\sigma$ from the definition.
Also, since $f_\sigma\in \mathcal{F}_U(S)$, it follows that $f\in \mathcal{M}(S)$
and thus $f^c_\sigma  \in \mathcal{M}(S) $ by Remark~\ref{1606a}. 
(Alternatively, it can verified directly that $f_{\sigma}$ and $f^c_\sigma$ are union-subadditive and intersection-subadditive.)	
\end{proof}

The following result was found in \cite{salomaa2014compositions} by exhaustively going through all the 720 rs functions that
permute the nonempty proper subsets of $S$ for $\vert S\vert=3$. Here, we give a formal proof and reformulate the result in terms of $\mathcal{F}_U^P(S)$.

\begin{theorem}\cite{salomaa2014compositions}\label{0106a}
	For $\vert S\vert=3$, the class $\mathcal{F}_U^P(S)$ is exactly the set of nondegenerate rs functions over $S$ that
	permute the nonempty proper subsets of $S$ and belong to $\mathcal{M}(S)$.
\end{theorem}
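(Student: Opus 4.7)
The plan is to prove the nontrivial inclusion: every nondegenerate rs function $f$ over $S$ that permutes the nonempty proper subsets of $S$ and satisfies $f\in\mathcal{M}(S)$ must lie in $\mathcal{F}_U^P(S)$ (the reverse inclusion is precisely the content of the two preceding propositions). Since $\vert S\vert=3$, the six nonempty proper subsets split into three singletons and three 2-element subsets, and by Theorem~\ref{2305b}, $f$ is both union-subadditive and intersection-subadditive throughout what follows.

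The heart of the argument is a dichotomy: either $f$ sends every singleton to a singleton ($n_1=3$) or $f$ sends every singleton to a 2-element subset ($n_1=0$), where $n_1$ denotes the number of singletons $\{a\}\subseteq S$ with $\vert f(\{a\})\vert=1$. Since union- and intersection-subadditivity are interchanged by the involution $X\mapsto S\setminus X$, the function $g(X):=f(S\setminus X)$ is again a nondegenerate permutation of the nonempty proper subsets belonging to $\mathcal{M}(S)$, with $n_1(g)=3-n_1(f)$. Hence it suffices to rule out $n_1=2$ (which via $g$ also rules out $n_1=1$). Assuming $n_1=2$, WLOG $f(\{a_1\})=\{\alpha_1\}$, $f(\{a_2\})=\{\alpha_2\}$, and $f(\{a_3\})=X_3$ for some 2-element subset $X_3$, with $\{a_1,a_2,a_3\}=\{\alpha_1,\alpha_2,\alpha_3\}=S$. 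Union-subadditivity on $\{a_1\},\{a_2\}$ together with the permutation constraint forces $f(\{a_1,a_2\})=\{\alpha_1,\alpha_2\}$; swapping the indices 1 and 2 if needed, WLOG $X_3=\{\alpha_1,\alpha_3\}$. Union-subadditivity on $\{a_1\},\{a_3\}$ then pins $f(\{a_1,a_3\})=\{\alpha_3\}$, leaving $f(\{a_2,a_3\})=\{\alpha_2,\alpha_3\}$ by elimination. Finally, intersection-subadditivity on $\{a_1,a_3\},\{a_2,a_3\}$ yields $f(\{a_3\})\subseteq\{\alpha_2,\alpha_3\}$, contradicting $X_3=\{\alpha_1,\alpha_3\}$.

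Once the dichotomy is in hand, the remaining two cases are direct. If $n_1=3$, define $\sigma\in\Sym(S)$ by $f(\{a\})=\{\sigma(a)\}$; union-subadditivity gives $f(\{a,b\})\subseteq\{\sigma(a),\sigma(b)\}$, and since 2-element subsets must map to 2-element subsets, equality holds and $f=f_\sigma$. If $n_1=0$, define $\tau\in\Sym(S)$ by $f(\{a\})=S\setminus\{\tau(a)\}$; for each $a$ with $\{a,b,c\}=S$, intersection-subadditivity on $\{a,b\},\{a,c\}$ forces $f(\{a,b\})\cup f(\{a,c\})=S\setminus\{\tau(a)\}$, and reading off the three resulting equations (with the three singleton images $f(\{a,b\}),f(\{a,c\}),f(\{b,c\})$ as the unknowns) gives $f(\{a,b\})=\{\tau(c)\}=\tau[S\setminus\{a,b\}]$, hence $f=f_\tau^c$.

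The main obstacle is the dichotomy step, which is the only place where both subadditivity conditions and the bijectivity of $f$ on nonempty proper subsets must be combined. The involution $f\mapsto g$ halves the case analysis; without such a symmetry argument, the mixed cases $n_1\in\{1,2\}$ would be more tedious, though still elementary.
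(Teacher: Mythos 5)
Your proof is correct, and the nontrivial inclusion follows the same overall skeleton as the paper's: split according to how many singletons have singleton images, identify $f_\sigma$ when all do, identify $f_\sigma^c$ when none do, and derive a contradiction in the mixed case. Where you genuinely diverge is in how the mixed case is killed. The paper handles $n_1\in\{1,2\}$ at once: it fixes one singleton with singleton image and one without, normalizes by composing with a suitable $f_\sigma$ (using that $g=f_\sigma\circ f$ inherits subadditivity and the permutation property, cf.\ Remark~\ref{1606a}), and then works through Case 1, Case 2.1 and Case 2.2 (each with a symmetric variant) on concrete subsets of $\{s_0,s_1,s_2\}$. You instead introduce the counting parameter $n_1$, observe that complementation $f\mapsto f^c$ preserves all hypotheses (De Morgan swaps the two subadditivity conditions) and sends $n_1\mapsto 3-n_1$, so only $n_1=2$ must be excluded, and that single configuration collapses under one short chain of forced values ending in a violation of intersection-subadditivity. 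This buys a smaller case analysis and avoids the normalization trick entirely; the paper's normalization, in exchange, makes every subcase a concrete hands-on verification. Two minor remarks: in your $n_1=0$ case you solve three set equations directly, whereas the paper gets this case for free by applying the $n_1=3$ case to $f^c$ (slightly slicker, and you could have done the same since your involution is already in play); and in both the $n_1=2$ and $n_1=0$ steps you silently use the counting fact that, under a bijection of the six nonempty proper subsets, the two-element subsets must map onto whatever the singletons miss --- this is immediate but worth one sentence in a formal write-up.
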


\begin{proof}
	Suppose $f$ is nondegenerate, permutes the nonempty proper subsets of $S$, and belongs to $\mathcal{M}(S)$. By Theorem~\ref{2305b}, $f$ is union-subadditive and intersection-subadditive.
	
	First of all, we assume that the images of singletons under $f$ are singletons. Since $f$ is union-subadditive and $f$ permutes the nonempty proper subsets of $S$, it forces that
	$f(\{x,y\})=f(\{x\})\cup f(\{y\})$ for all $x,y\in S$. Therefore, 
	$f=f_\sigma$
	where $\sigma$ is determined by $f(\{x\})=\{\sigma(x)\}$ for all $x\in S$. 
	
	Secondly, we assume none of the image of singletons under $f$ is a singleton. Then
	the images of singletons under $f^c$ must be singletons  because $f$ permutes the nonempty proper subsets of $S$. Since $f$ is intersection-subadditive, it follows that $f^c$ is union-subadditive by the De Morgan's Law.
	Hence, as in the first case, 
	$f^c=f_\sigma$ for some permutation $\sigma$ on $S$. Therefore, $f=f^c_\sigma$ in this case. 
	
Finally, we argue by contradiction that the case where
some of the images of singletons under $f$ are singletons and some are not is impossible. Without loss of generality, we may let $S=\{s_0,s_1,s_2\}$ and assume $f(\{s_0\})$ is a singleton while $f(\{s_1\})$ is not.
	
Case 1. $f(\{s_0\})\cap f(\{s_1\})\neq \emptyset$.\\
Then choose a permutation $\sigma\colon S\rightarrow S$ such that $\sigma[ f(\{s_0\})]=\{s_0\}$ and
$\sigma[ f(\{s_1\})]=\{s_0,s_1\}$. Let $g=f_\sigma\circ f$. Then $g(\{s_0\})=\{s_0\}$ and $g(\{s_1\})=\{s_0,s_1\}$. It is straightforward to show that $g$ is  union-subadditive, intersection-subadditive, and permutes the nonempty proper subsets of $S$.
	Then $g(\{s_0,s_1\})$ must be $\{s_1\}$ because $g(\{s_0,s_1\})\subseteq g(\{s_0\})\cup g(\{s_1\})$ (and $g$ permutes the nonempty proper subsets of $S$). It follows that $g(\{s_0,s_2\})=\{s_0,s_2\}$ because $g(\{s_0\})\subseteq g(\{s_0,s_1\})\cup g(\{s_0,s_2\})$. Hence, the remaining possible values for $g(\{s_1,s_2\})$ 
	are $\{s_2\}$ and $\{s_1, s_2\}$. However, this contradicts the fact that
	$g(\{s_1\})\subseteq g(\{s_0,s_1\})\cup g(\{s_1,s_2\})$.
	
	Case 2. $f(\{s_0\})\cap f(\{s_1\})=\emptyset$.
Then choose a permutation $\sigma\colon S\rightarrow S$ such that $\sigma[ f(\{s_0\})]=\{s_0\}$ and
$\sigma[ f(\{s_1\})]=\{s_1,s_2\}$. Let $g=f_\sigma\circ f$. Then $g(\{s_0\})=\{s_0\}$ and $g(\{s_1\})=\{s_1,s_2\}$.	
Similarly, $g$ is union-subadditive, intersection-subadditive, and permutes the nonempty proper subsets of $S$.
	
	Case 2.1. $g(\{s_2\})=\{s_1\}$ or $g(\{s_2\})=\{s_2\}$.\\
	Assume $g(\{s_2\})=\{s_1\}$. Then $g(\{s_0,s_2\})=\{s_0,s_1\}$ and $g(\{s_1,s_2\})=\{s_2\}$ due to union-subadditivity.
	Hence, $g(\{s_0,s_1\})$ takes the remaining value $\{s_0,s_2\}$. However, this contradicts $g(\{s_1\})\subseteq g(\{s_0,s_1\})\cup g(\{s_1,s_2\})$.
	The case $g(\{s_2\})=\{s_2\}$ is similar.
	
	Case 2.2. $g(\{s_2\})=\{s_0,s_1\}$ or $g(\{s_2\})=\{s_0,s_2\}$.\\
	Assume $g(\{s_2\})=\{s_0,s_1\}$. Then $g(\{s_0,s_2\})=\{s_1\}$ due to union-subadditivity.
	It follows that $g(\{s_0,s_1\})=\{s_0,s_2\}$ because $g(\{s_0\})\subseteq g(\{s_0,s_1\})\cup g(\{s_0,s_2\})$. Hence, $g(\{s_1,s_2\})$ takes the remaining value $\{s_2\}$. However, this contradicts the fact that $g(\{s_1\})\subseteq g(\{s_0,s_1\})\cup g(\{s_1,s_2\})$.
	The case $g(\{s_2\})=\{s_0,s_2\}$ is similar.
\end{proof}

The previous proof implied implicitly the following observations, straightforward proofs of which are omitted.

\begin{remark}\label{1606a}
Suppose $\sigma\colon S\rightarrow S$ and $f =\res_A$. Then
\begin{enumerate}
\item $f^c=\res_{A'}$, where $A'=\{\, (I_a,R_a, P_a)\mid a\in A\,\}$;
\item $f_{\sigma}\circ f= \res_{A'}$, where $A'=\{\, (R_a,I_a, \sigma[P_a])\mid a\in A\,\}$;
\item $f\circ f_\sigma = \res_{A'}$, where $A'=\{\, (\sigma^{-1}[R_a],\sigma^{-1}[I_a], P_a)\mid a\in A\,\}$.	
\end{enumerate}	
Therefore,  if $f\in \mathcal{M}(S)$, then
$f^c, f_{\sigma}\circ f\in \mathcal{M}(S)$ and if additionally $\sigma$ is one-to-one, then $f\circ f_\sigma\in \mathcal{M}(S)$.
\end{remark}

The following twelve rs functions over $S=\{s_0,s_1,s_2\}$ constitute the class $\mathcal{F}^P_U(S)$:
\begin{align*}
f_{\sigma_1}={}&[\mathbf{0} \;\,\mathbf{1} \;\,\mathbf{2} \;\,\mathbf{3} \;\,\mathbf{4}\;\, \mathbf{5}  \;\,\mathbf{6}\;\, \mathbf{0}  ],   & f_{\sigma_2}={}  &[\mathbf{0} \;\,\mathbf{2}\;\, \mathbf{1}\;\, \mathbf{3}\;\, \mathbf{4} \;\,\mathbf{6}\;\,  \mathbf{5}\;\, \mathbf{0}  ], & f_{\sigma_3}={}&[\mathbf{0}\;\, \mathbf{4}\;\, \mathbf{2}\;\, \mathbf{6}\;\, \mathbf{1}\;\, \mathbf{5}\;\,  \mathbf{3}\;\, \mathbf{0}  ], \\
 f_{\sigma_4}={} &[\mathbf{0}\;\, \mathbf{1}\;\, \mathbf{4}\;\, \mathbf{5}\;\, \mathbf{2}\;\, \mathbf{3} \;\, \mathbf{6}\;\, \mathbf{0}  ],&
f_{\sigma_5}={}  &[\mathbf{0}\;\, \mathbf{2}\;\, \mathbf{4}\;\, \mathbf{6}\;\, \mathbf{1}\;\, \mathbf{3}\;\,  \mathbf{5}\;\, \mathbf{0}  ],   & 
f_{\sigma_6}={}&[\mathbf{0}\;\, \mathbf{4}\;\, \mathbf{1}\;\, \mathbf{5}\;\, \mathbf{2}\;\, \mathbf{6}\;\,  \mathbf{3}\;\, \mathbf{0}  ],\\
f^c_{\sigma_1}={}&[\mathbf{0} \;\,\mathbf{6} \;\,\mathbf{5} \;\,\mathbf{4} \;\,\mathbf{3}\;\, \mathbf{2}  \;\,\mathbf{1}\;\, \mathbf{0}  ],   & 
f^c_{\sigma_2}={}&[\mathbf{0} \;\,\mathbf{5}\;\, \mathbf{6}\;\, \mathbf{4}\;\, \mathbf{3} \;\,\mathbf{1}\;\,  \mathbf{2}\;\, \mathbf{0}  ], &
f^c_{\sigma_3}={} &[\mathbf{0}\;\, \mathbf{3}\;\, \mathbf{5}\;\, \mathbf{1}\;\, \mathbf{6}\;\, \mathbf{2}\;\,  \mathbf{4}\;\, \mathbf{0}  ], \\
f^c_{\sigma_4}={}&[\mathbf{0}\;\, \mathbf{6}\;\, \mathbf{3}\;\, \mathbf{2}\;\, \mathbf{5}\;\, \mathbf{4} \;\, \mathbf{1}\;\, \mathbf{0}  ],&
f^c_{\sigma_5}={}&[\mathbf{0}\;\, \mathbf{5}\;\, \mathbf{3}\;\, \mathbf{1}\;\, \mathbf{6}\;\, \mathbf{4}\;\,  \mathbf{2}\;\, \mathbf{0}  ],   & 
f^c_{\sigma_6}={}&[\mathbf{0}\;\, \mathbf{3}\;\, \mathbf{6}\;\, \mathbf{2}\;\, \mathbf{5}\;\, \mathbf{1}\;\,  \mathbf{4}\;\, \mathbf{0}  ],
\end{align*}
where $\sigma_1$ is the identity permutation,
$\sigma_2= (s_0  \;\, s_1 )$, $\sigma_3= (s_0  \;\, s_2 )$,
$\sigma_4= (s_1  \;\, s_2 )$, $\sigma_5= (s_0  \;\,s_1 \;\,s_2 )$,
and $\sigma_6= (s_2 \;\,s_1 \;\, s_0 )$.
Due to our choice of representation, the row vector representing  the complement $f^c$ of an rs function $f$ is the mirror image of the row vector representing $f$. 

Going beyond the ternary alphabet, we found out computationally that the conclusion of Theorem~\ref{0106a} holds for the quarternary alphabet. Hence, we state the following theorem based on our computer simulation.

\begin{theorem}\label{0906a}
	For $\vert S\vert \in \{3,4\}$, the class $\mathcal{F}_U^P(S)$ is exactly the set of nondegenerate rs functions over $S$ that
	permute the nonempty proper subsets of $S$ and belong to $\mathcal{M}(S)$.
\end{theorem}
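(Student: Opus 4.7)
The plan is to extend the proof of Theorem~\ref{0106a} from $|S|=3$ to $|S|=4$, keeping the same broad structure. Let $f$ be nondegenerate, permute the nonempty proper subsets of $S$, and belong to $\mathcal{M}(S)$; by Theorem~\ref{2305b}, $f$ is both union-subadditive and intersection-subadditive. The goal is to show that $f=f_\sigma$ or $f=f_\sigma^c$ for some $\sigma\in\Sym(S)$.

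First I would dispose of two ``clean'' cases exactly as in Theorem~\ref{0106a}. If $f(\{x\})$ is a singleton for every $x\in S$, then union-subadditivity together with the permutation property forces $f(X)=\bigcup_{x\in X} f(\{x\})$ on each nonempty proper $X$, giving $f=f_\sigma$. Dually, if $f(S\setminus\{x\})$ is a singleton for every $x\in S$, then $f^c$ maps singletons to singletons; since $f^c\in\mathcal{M}(S)$ by Remark~\ref{1606a} and still permutes the nonempty proper subsets, applying the previous step to $f^c$ yields $f^c=f_\sigma$, whence $f=f_\sigma^c$.

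The bulk of the work is to rule out every ``mixed'' profile. I would parametrize the remaining cases by the multiset $\{|f(\{s_0\})|,|f(\{s_1\})|,|f(\{s_2\})|,|f(\{s_3\})|\}\subseteq\{1,2,3\}$, and cut down the number of representatives by replacing $f$ with $f^c$ or with a composition $f_\tau\circ f$ (equivalently $f\circ f_{\tau^{-1}}$) for suitable $\tau\in\Sym(S)$, all of which preserve membership in $\mathcal{M}(S)$ and the permutation property by Remark~\ref{1606a}. For each normalized representative $g$, once the singleton values are pinned down, the inclusions $g(X\cup Y)\subseteq g(X)\cup g(Y)$ and $g(X\cap Y)\subseteq g(X)\cup g(Y)$, combined with $g$ being a permutation of the nonempty proper subsets, should force a chain of assignments on the $2$- and $3$-element subsets; I expect each such chain to terminate in a violation of one of the subadditivity inclusions, mirroring the way the subcases of Case~1 and Case~2 closed out in the proof of Theorem~\ref{0106a}.

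The main obstacle is the combinatorial expansion of this last step. For $|S|=4$ the size profile ranges over many more possibilities than for $|S|=3$, and after normalization each surviving profile generates its own multi-branch deduction over the $\binom{4}{2}=6$ two-element and $\binom{4}{3}=4$ three-element subsets. This is precisely why the authors certified the quaternary case by exhaustive computer simulation rather than writing out the full case analysis, and I do not see a conceptual shortcut---such as a direct derivation of $|f(X)|\in\{|X|,|S|-|X|\}$ purely from the two subadditivity properties---that would bypass the bookkeeping.
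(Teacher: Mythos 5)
Your proposal is not a complete proof: the decisive step---showing that every ``mixed'' profile for $\vert S\vert=4$ leads to a violation of union- or intersection-subadditivity---is never carried out. Phrases such as ``should force a chain of assignments'' and ``I expect each such chain to terminate in a violation'' are placeholders for exactly the content a proof must supply; nothing in your normalization guarantees in advance that every surviving profile collapses, and you concede you see no conceptual shortcut. There is also a structural wrinkle in your parametrization: the multiset of singleton-image sizes does not by itself isolate your second clean case. For instance, the profile $\{3,3,3,3\}$ contains the functions $f_\sigma^c$ (for which co-singletons do map to singletons) as well as, potentially, candidates not of this form, so even that profile needs a sub-analysis (say, by passing to $f^c$ and re-examining the resulting profile) before anything is ruled out. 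As written, your argument establishes only the two clean cases, which is strictly less than what the theorem asserts.

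That said, your diagnosis of the situation matches the paper: the paper does not prove the $\vert S\vert=4$ case deductively either. Theorem~\ref{0906a} is stated explicitly ``based on our computer simulation'': the $\vert S\vert=3$ case is Theorem~\ref{0106a} (proved formally), while the quaternary case rests on an exhaustive computational check that, among the $14!$ rs functions permuting the nonempty proper subsets of $S$, only the $48$ members of $\mathcal{F}_U^P(S)$ belong to $\mathcal{M}(S)$. So to turn your proposal into an acceptable argument you must either finish the finite case analysis you outline (tedious but in principle possible, and it would in fact improve on the paper by supplying the missing formal proof) or actually perform and report the computation; expecting the cases to close is not the same as closing them.
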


Notice that for a quarternary alphabet $S$, there are $14!=87,178,291,200$ rs functions over $S$ that
permute the nonempty proper subsets of $S$. It is surprising that only $48$ out of these can be 
specified by nondegenerate minimal reaction systems.\footnote{With our last refined simulation, the running time to obtain this computational result is pleasantly short, compared to which was needed for Theorem~\ref{0305a}.} 
Therefore, we ask whether the result of Theorem~\ref{0906a} holds for higher alphabets and if that is the case, a formal proof would be more desirable.

\begin{corollary}
Suppose $\vert S\vert\in \{3,4\}$. The set of nondegenerate rs functions over $S$ belonging to $\mathcal{M}(S)$ does not constitute a complete set of generating functions under composition for the set of all nondegenerate rs functions over $S$.
\end{corollary}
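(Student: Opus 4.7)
The plan is to combine Theorem~\ref{0906a} with a short reduction argument and a simple count. Since $\mathcal{F}_U^P(S)$ has only $12$ elements when $\vert S\vert=3$ and $48$ when $\vert S\vert=4$, while the number of nondegenerate rs functions over $S$ permuting the nonempty proper subsets of $S$ is $(2^{\vert S\vert}-2)!$, I can fix a nondegenerate permutation $h$ of the nonempty proper subsets of $S$ that does not lie in $\mathcal{F}_U^P(S)$; Theorem~\ref{0906a} tells us that such an $h$ lies outside $\mathcal{M}(S)$ to begin with. The goal is then to show that $h$ also cannot be expressed as a composition $f_1\circ\dotsb\circ f_k$ in which every $f_i$ is a nondegenerate rs function in $\mathcal{M}(S)$.

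The central step is the following reduction lemma, which I would prove by induction on $k$: if $f_1,\dotsc,f_k$ are nondegenerate rs functions over $S$ whose composition permutes the nonempty proper subsets of $S$, then every factor $f_i$ also permutes the nonempty proper subsets of $S$. The inductive step peels off the outermost factor $f_k$ on the right: were $f_k(X_0)\in\{\emptyset,S\}$ for some nonempty proper $X_0$, then iterated applications of the nondegeneracy relations $f_j(\emptyset)=f_j(S)=\emptyset$ would give $h(X_0)=\emptyset$, contradicting that $h$ permutes the nonempty proper subsets. Hence $f_k$ restricts to a self-map of the set $N$ of nonempty proper subsets of $S$; injectivity of $h$ on $N$ then forces $f_k|_N$ to be injective, and finiteness of $N$ upgrades this to bijectivity. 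The residual composition $f_1\circ\dotsb\circ f_{k-1}$ is a nondegenerate rs function, and on $N$ it equals $h|_N\circ(f_k|_N)^{-1}$, which permutes $N$; the inductive hypothesis then finishes the lemma.

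With the lemma in hand, the corollary follows quickly. Suppose, for contradiction, that $h=f_1\circ\dotsb\circ f_k$ with each $f_i$ a nondegenerate rs function in $\mathcal{M}(S)$. The lemma gives that each $f_i$ permutes the nonempty proper subsets of $S$, so by Theorem~\ref{0906a} each $f_i$ lies in $\mathcal{F}_U^P(S)$; then by Proposition~\ref{1806b}, the composition $h$ lies in $\mathcal{F}_U^P(S)$, contradicting the choice of $h$.

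The only subtlety I anticipate is in the cascade argument used to pass nondegeneracy through the composition; the key observation is simply that every intermediate function sends $\emptyset$ (or $S$) to $\emptyset$, so once any factor outputs $\emptyset$ or $S$, the output of the full composition on the corresponding input is forced to $\emptyset$. Apart from this bookkeeping, the proof is essentially a packaging of Theorem~\ref{0906a} and Proposition~\ref{1806b}, and no new combinatorial input is required.
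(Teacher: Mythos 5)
Your proposal is correct and follows essentially the same route as the paper: pick a nondegenerate rs function permuting the nonempty proper subsets of $S$ that lies outside $\mathcal{F}_U^P(S)$, then derive a contradiction from Theorem~\ref{0906a} together with the closure of $\mathcal{F}_U^P(S)$ under composition (Proposition~\ref{1806b}). Your inductive reduction lemma simply makes rigorous the step the paper asserts without proof, namely that every factor in such a composition must itself permute the nonempty proper subsets of $S$.
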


\begin{proof}
Suppose $f$ is any nondegenerate rs function over $S$  that permutes the nonempty proper subsets of $S$ but $f\notin \mathcal{F}^P_U(S)$. We claim that $f$ cannot be generated under composition by nondegenerate rs  functions belonging to $\mathcal{M}(S)$. We argue by contradiction.
Assume there is such a composition equaling $f$. Then each of the composing function must also permute the nonempty proper subsets of $S$ and thus is in $\mathcal{F}^P_U(S)$ by Theorem~\ref{0906a}. Since 
$\mathcal{F}^P_U(S)$ is closed under composition by Proposition~\ref{1806b},
it follows that $f\in \mathcal{F}^P_U(S)$, a contradiction.
\end{proof}

We end this section by mentioning further computational results for the ternary alphabet for inspiration. Similar attempt on the quarternary alphabet has to be aborted due to computational limitation. Out of $8^6=262144$ nondegenerate rs functions over a ternary alphabet $S$, there are  
$24389$ that belong to $\mathcal{M}(S)$ and exactly twelve among these, namely those in $\mathcal{F}^P_U(S)$, further permute the nonempty proper subsets of $S$.
Under composition, these $24389$ functions generate $257404$ nondegenerate rs functions over $S$, $98.2\%$ of all nondegenerate rs functions over $S$.

The following table shows the distribution of these
$257404$ rs functions according to $\vert \{ \,f(X) \mid \emptyset \neq X\subsetneq S\, \}	\vert$, which is the same as the \emph{genus} of $f$ restricted to  $2^S\backslash \{\emptyset, S\}$ and thus will be called here the \emph{$N\!$-genus of $f$}, with $N$ refers to nondegenerate.
\begin{table}[h]
\begin{center}
	\begin{tabular}{ | c | c | }
		\hline
$N$-genus of $f$ & Number of  $f$   \\ \hline
		1& 8  \\ \hline
		2 & 1736\\ \hline
		3 &  30240\\ \hline
		4 & 109200  \\ \hline
		5 &  100800    \\ \hline
		6 &   15420  \\ \hline
			\end{tabular}
\vspace{2mm}
\caption{Distribution of  rs functions over a ternary alphabet $S$ generated under composition from nondegenerate rs functions belonging to $\mathcal{M}(S)$} 
\end{center}
\end{table}

Our result in fact shows that every nondegenerate rs function over a ternary alphabet $S$ with $N\!$-genus less than the maximum six can be generated under composition from nondegenerate rs functions belonging to $\mathcal{M}(S)$. Among those with $N\!$-genus equal to six, there are $4740$ that cannot be thus generated, including $708$ ($= 6!-12$) that permute the nonempty proper subsets of $S$.


\section{Further Extension Allowing Degenerate Reaction Systems}

Now, we will study problems analogous to those encountered in the last section with the allowance of degenerate reaction systems. In this context, we are foremostly led to ask whether any permutation on the whole power set of $S$ that belongs to $\mathcal{M}(S)$ exists, apart from the identity permutation. The following example not only answers this, but shows that, unlike the ternary alphabet, the set of permutations on $2^S$ that belong to $\mathcal{M}(S)$ is not closed under composition. 

\begin{example}
Let $S=\{s_0,s_1,s_2\}$. Consider the rs functions $f$ and $g$ defined by
$f=(\mathbf{2}\;\,\mathbf{3})(\mathbf{6}\;\,\mathbf{7})$
and $g=(\mathbf{4}\;\,\mathbf{6})(\mathbf{5}\;\,\mathbf{7})$.
 Both $f$ and $g$ belong to $\mathcal{M}(S)$.
Consider the composition $h=f\circ g$  (from right to left). Then $h=(\mathbf{2}\;\,\mathbf{3})(\mathbf{4}\;\,\mathbf{7}\;\, \mathbf{5}\;\,\mathbf{6}     )$.
However, $h$ is not intersection-subadditive because
$S=h(\{s_2\})\nsubseteq h(\{s_0,s_2\})\cup h(\{s_1,s_2\})=\{s_1,s_2\}\cup \{s_2\}$. Therefore, $h$ does not belong to $\mathcal{M}(S)$.
\end{example}


There are $8^8=16777216$ rs functions over a ternary alphabet $S$.
Computationally, we found that there are 405224 rs functions over $S$ belonging to $\mathcal{M}(S)$. Among these, 408 are permutations.

\begin{theorem}\label{1804a}
Suppose $S$ is a ternary alphabet. There are exactly 408 rs functions over $S$ belonging to $\mathcal{M}(S)$ that are permutations on $2^S$, which include  $(\mathbf{2} \;\,\mathbf{7} \;\,	\mathbf{0} \;\,\mathbf{1} \;\, \mathbf{4} \;\,\mathbf{3}
\;\,\mathbf{6} \;\,\mathbf{5})$ and $(\mathbf{2} \;\,\mathbf{7})$. The latter two form a basis of $\Sym(2^S)$.
\end{theorem}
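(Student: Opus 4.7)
The plan is to establish the three claims of the theorem in sequence: the exact count of $408$ permutations of $2^S$ that lie in $\mathcal{M}(S)$, the membership in $\mathcal{M}(S)$ of the two explicit permutations $f_1=(\mathbf{2}\;\mathbf{7}\;\mathbf{0}\;\mathbf{1}\;\mathbf{4}\;\mathbf{3}\;\mathbf{6}\;\mathbf{5})$ and $f_2=(\mathbf{2}\;\mathbf{7})$, and the fact that $\langle f_1,f_2\rangle=\Sym(2^S)$. The count itself will, as with Theorem~\ref{0906a}, be obtained by a finite computer enumeration: iterate over all $|\Sym(2^S)|=8!=40320$ permutations of $2^S$ and for each apply the characterisation of Theorem~\ref{2305b}, testing union-subadditivity and intersection-subadditivity; each test is a constant-time scan over pairs $(X,Y)\in 2^S\times 2^S$.

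To establish membership of $f_1$ and $f_2$ in $\mathcal{M}(S)$, the cleanest approach is to tabulate their values on all eight subsets of $S$ and verify the two subadditivity inequalities of Theorem~\ref{2305b} by direct case analysis. The function $f_2$ only disturbs $\{s_1\}$ and $S$, so the verification reduces to the few pairs $(X,Y)$ involving these sets; for $f_1$ the check is longer but equally routine, and inspection of the table shows no violating pair. Alternatively, one may exhibit explicit minimal reaction systems specifying $f_j$ by a coordinate-wise construction: for each $s_i\in S$, choose minimal reactions of the form $(R,I,\{s_i\})$ with $|R|,|I|\leq 1$ whose combined firing pattern matches the Boolean function $X\mapsto [s_i\in f_j(X)]$. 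This delivers $f_j\in\mathcal{M}(S)$ directly from the definition.

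The generation claim reduces to a classical fact from permutation group theory: if $\sigma$ is an $n$-cycle on a set $T$ and $\tau=(a,\sigma(a))$ for some $a\in T$, then $\langle\sigma,\tau\rangle=\Sym(T)$, because conjugation gives $\sigma^k\tau\sigma^{-k}=(\sigma^k(a),\sigma^{k+1}(a))$ for every $k$, yielding all cyclically adjacent transpositions of $T$, and these generate $\Sym(T)$. Here $T=2^S$, $\sigma=f_1$, $\tau=f_2$, and $f_1(\mathbf{2})=\mathbf{7}$, so the hypothesis applies verbatim and $\langle f_1,f_2\rangle=\Sym(2^S)$. The principal obstacle is really only the computer enumeration in the first part: the second part is an elementary case check and the third is an appeal to a well-known group-theoretic lemma, so the non-trivial labour is bookkeeping to match the enumeration output precisely against the subadditivity test of Theorem~\ref{2305b}.
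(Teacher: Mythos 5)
Your proposal is correct and follows essentially the same route as the paper: the count of $408$ is established by computer enumeration (the paper likewise appeals to its simulation, implicitly using the characterisation of Theorem~\ref{2305b}), and the basis claim is the standard group-theoretic fact that an $n$-cycle $(a_1\;a_2\;\dotsm\;a_n)$ together with the transposition $(a_1\;a_2)$ generates the full symmetric group, which applies here since $f_1(\mathbf{2})=\mathbf{7}$. Your added details — the conjugation argument producing all cyclically adjacent transpositions, and the direct subadditivity check for the two explicit permutations — are correct elaborations of steps the paper leaves implicit.
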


\begin{proof}
The first part is verified by our computer simulation. For the second part, it is a standard result in group theory that $(a_1 \;\,a_2 \;\,a_3 \;\,\dotsm \;\, a_n)$ and $(a_1 \:\,a_2)$ form a basis of $\Sym(A)$, where $A=\{a_1,a_2,a_3, \dotsc, a_n\}$.	
\end{proof}


The following classical result regarding composition of unary operations over a finite domain will be utilized.  

\begin{theorem}\cite{piccard1946bases, salomaa2003composition}\label{2206a}
	Suppose $A$ is a finite set of size $n\geq 3$. Then any three functions form a complete set of generating functions for the set of all functions from $A$ into $A$	
	if and only if two of them form a basis of $\Sym(A)$ and the genus of the third is $n-1$. 
\end{theorem}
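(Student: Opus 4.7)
The plan is to exploit two elementary facts about composition of functions on a finite set $A$: (i) $|\phi\circ\psi(A)|\leq \min\{|\phi(A)|,|\psi(A)|\}$, so genus (meaning $|f(A)|$) is monotone non-increasing under composition; and (ii) $\phi\circ\psi$ is a permutation only if both $\phi$ and $\psi$ are permutations.

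For the necessity direction, suppose $\{F_1,F_2,F_3\}$ generates $A^A$ under composition. By (ii), every permutation in the generated set is a composition involving only permutation generators, so the permutations among $F_1,F_2,F_3$ already generate all of $\Sym(A)$. Since $\Sym(A)$ is non-cyclic for $n\geq 3$, at least two of them, say $F_1,F_2$, are permutations, and together they generate $\Sym(A)$; moreover neither alone suffices (otherwise $\Sym(A)$ would be cyclic), so $\{F_1,F_2\}$ is a basis. The remaining $F_3$ cannot be a permutation, for otherwise only permutations are produced. By (i), every non-permutation in the generated set has genus at most $|F_3(A)|$; since we must realize functions of every possible genus, and in particular some function of genus $n-1$, we conclude $|F_3(A)|=n-1$.

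For the sufficiency direction, assume $\pi_1,\pi_2$ form a basis of $\Sym(A)$ and $|g(A)|=n-1$. Let $C$ be the composition-closure of $\{\pi_1,\pi_2,g\}$; then $\Sym(A)\subseteq C$. Because $g$ has image of size $n-1$, there is a unique pair $\{a_1,a_2\}$ with $g(a_1)=g(a_2)$. Writing $g$ as a permutation composed with an elementary collapse $\gamma$ (one that identifies a specified pair $\{x,y\}$ to a single value and is the identity elsewhere) shows $\gamma\in C$; conjugating by elements of $\Sym(A)\subseteq C$ then places every elementary collapse in $C$. An induction on $n-|f(A)|$ now factors an arbitrary $f\colon A\to A$ as a composition of elementary collapses preceded by a permutation: given a collision pair $\{x,y\}$ of $f$ and an element $w\notin f(A)$, define $f'$ by $f'(a)=f(a)$ for $a\neq x$ and $f'(x)=w$; then $f=f'\circ\gamma_{x,y}$ for the appropriate collapse $\gamma_{x,y}$, with $|f'(A)|=|f(A)|+1$. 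Hence $C=A^A$.

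The main obstacle I anticipate is the bookkeeping in the collapse-decomposition step in the sufficiency direction: one must choose the collapse factor and the replacement value $w$ carefully so that the residual function genuinely has strictly larger image, and confirm the base case where $f$ is already a permutation. The subtle point in the necessity direction is upgrading ``$F_1,F_2$ generate $\Sym(A)$'' to ``$\{F_1,F_2\}$ is a basis'', which rests on the non-cyclicity of $\Sym(A)$ and accounts for the hypothesis $n\geq 3$.
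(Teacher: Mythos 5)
Your proof cannot be compared against an in-paper argument, because the paper does not prove this statement: it is quoted as a classical result, with the proof deferred to the cited works of Piccard (1946) and Salomaa (2003). Judged on its own merits, your argument is correct and complete in outline, and it is essentially the standard proof of this folklore theorem. For necessity, your facts (i) and (ii) are exactly right on a finite set (a composition is bijective only if every factor is), so the permutations among the three generators must by themselves generate $\Sym(A)$; non-cyclicity of $\Sym(A)$ for $n\geq 3$ forces at least two of them to be permutations, all three cannot be (else no non-permutation is ever produced), and two generating elements of a non-cyclic group automatically form a basis; the genus monotonicity then pins the third function at genus exactly $n-1$, since it is not a permutation yet some generated function must have genus $n-1$. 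For sufficiency, the factorization $g=\pi\circ\gamma$ is legitimate and is the point to spell out in a written version: because genus $n-1$ means $g$ has a \emph{unique} collision pair $\{a_1,a_2\}$ and a unique element missing from its image, setting $\pi(a)=g(a)$ for $a\neq a_2$ and sending $a_2$ to the missing element does define a permutation, whence $\gamma=\pi^{-1}\circ g$ lies in the generated set $C$; two-transitivity of $\Sym(A)$ yields every elementary collapse by conjugation; and your induction step $f=f'\circ\gamma_{x,y}$ with $f'(x)=w\notin f(A)$ genuinely raises the image size by one, with permutations as the base case. So both directions close, and your write-up would serve as a self-contained proof of a result the paper only cites.
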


\begin{corollary}
Suppose $S$ is a ternary alphabet.
There exist three fixed rs functions over $S$ belonging to $\mathcal{M}(S)$ that form a complete set of generating functions for the set of all rs functions over $S$.
\end{corollary}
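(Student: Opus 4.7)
The plan is to apply Theorem~\ref{2206a} with $A=2^S$, which has size $n=2^3=8\geq 3$. According to that theorem, to obtain a complete set of generators for all functions $2^S\rightarrow 2^S$ (i.e., all rs functions over $S$), I would exhibit three rs functions in $\mathcal{M}(S)$: two that together form a basis of $\Sym(2^S)$, and a third with genus $n-1=7$.

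For the first two generators, I would take the two permutations furnished by Theorem~\ref{1804a}, namely $(\mathbf{2}\;\,\mathbf{7}\;\,\mathbf{0}\;\,\mathbf{1}\;\,\mathbf{4}\;\,\mathbf{3}\;\,\mathbf{6}\;\,\mathbf{5})$ and $(\mathbf{2}\;\,\mathbf{7})$. Theorem~\ref{1804a} asserts both that they belong to $\mathcal{M}(S)$ and that they form a basis of $\Sym(2^S)$, so nothing further needs checking for them.

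For the third generator, I would take $f_{\mathrm{id}}$, the rs function associated with the identity permutation on $S$ (any $f_\sigma$ with $\sigma\in\Sym(S)$ would work equally well). By definition, $f_{\mathrm{id}}$ sends both $\emptyset$ and $S$ to $\emptyset$ and fixes each of the six nonempty proper subsets of $S$, so its image consists of exactly seven distinct subsets of $S$; hence its genus is $7$. Moreover, $f_{\mathrm{id}}\in \mathcal{F}_U(S)\subseteq \mathcal{M}(S)$, as already noted in the paragraph following the definition of $\mathcal{F}_U(S)$.

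With these three functions in hand, Theorem~\ref{2206a} immediately yields that they form a complete set of generating functions for the set of all functions from $2^S$ into $2^S$, which coincides with the set of all rs functions over $S$. There is no serious obstacle here: the only nontrivial input is Theorem~\ref{1804a} itself (which supplies the basis of $\Sym(2^S)$ inside $\mathcal{M}(S)$); locating a third function of genus $7$ in $\mathcal{M}(S)$ reduces to the simple observation that any $f_\sigma$ coming from a permutation $\sigma$ of the ternary background set has this property.
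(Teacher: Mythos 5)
Your proposal is correct and follows essentially the same route as the paper: it invokes Theorem~\ref{1804a} for the two permutations forming a basis of $\Sym(2^S)$, uses $f_\sigma$ with $\sigma$ the identity (which lies in $\mathcal{F}_U(S)\subseteq\mathcal{M}(S)$ and has genus $\vert 2^S\vert-1=7$) as the third generator, and concludes via Theorem~\ref{2206a}. The only difference is that you spell out the genus computation for $f_{\mathrm{id}}$ explicitly, which the paper leaves as an assertion.
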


\begin{proof}
The rs function $f_\sigma$ over $S$, where $\sigma$ is the identity permutation, has genus $\vert 2^S\vert -1$. Hence, $f_\sigma$ together with the permutations
$(\mathbf{2} \;\,\mathbf{7} \;\,	\mathbf{0} \;\,\mathbf{1} \;\, \mathbf{4} \;\,\mathbf{3}
\;\,\mathbf{6} \;\,\mathbf{5})$ and $(\mathbf{2} \;\,\mathbf{7})$ form a complete set of generating functions for the set of all rs functions over $S$ by Theorems~\ref{1804a} and \ref{2206a}.	
\end{proof}

When it comes to a quarternary alphabet $S$, the computational complexity increases dramatically: for example, there are $16^{16}$ rs functions over $S$, as opposed to only $8^8$ over a ternary alphabet. We are able to filter out rs functions over $S$ belonging to $\mathcal{M}(S)$ that are permutations by approximating them with partial functions. 

For $n\in \{0,1,2,\dotsc, 15\}$, let 
\begin{multline*}
A_n = \{ \, f\colon \{\mathbf{0},\mathbf{1},\mathbf{2},\dotsc,\mathbf{n}\}\rightarrow  \{\mathbf{0},\mathbf{1},\mathbf{2},\dotsc,\mathbf{15}\}   \mid f \text{ is one-to-one and }\\
f(X\cup Y) \cup f(X\cap Y)\subseteq f(X)\cup f(Y) \text{ for all } X,Y \in \{\mathbf{0},\mathbf{1},\mathbf{2},\dotsc,\mathbf{n}\} \,\}.
\end{multline*}

\begin{table}[h]
\begin{center}
	\begin{tabular}{ | c | c | c|c| }
		\hline
		n & Size of $A_n$    & n & Size of $A_n$ \\ \hline
		0 &   16      & 8 &  3463008    \\ \hline
		1 & 240  & 9 & 2835240\\ \hline
	2 & 1840 & 10 &  1337520\\ \hline
3 &  17776 & 11 & 855576\\ \hline
4 &  74952 & 12 & 170592 \\ \hline
5 &  223992 & 13 &  72216   \\ \hline
6 &   360540  & 14 & 42456\\ \hline
7 & 1110864  & 15 & 23424\\ \hline
	\end{tabular}
\vspace{2mm}
\caption{Size of $A_n$}
\end{center}
\end{table}

From the definition, the set $A_{15}$ consists of rs functions over $S$ that are permutations, union-subadditve and intersection-subadditive. Our approach significantly assists in reducing the complexity of our computation because for every integer $0\leq n\leq 14$ and $f\in A_{n+1}$, the restriction of $f$ to $\{\mathbf{0},\mathbf{1},\mathbf{2},\dotsc,\mathbf{n}\}$ belongs to $A_n$. Therefore, to obtain $A_{n+1}$, it suffices to find for every $f\in A_n$ every extension of $f$ by an additional value that fulfills the additional relevant clauses for union-subadditivity and intersection-subadditivity. 

We now state the following result based on our computer simulation. 

\begin{theorem}\label{0305a}
Suppose $S$ is a quarternary alphabet. There are exactly 23424 rs functions over $S$ belonging to $\mathcal{M}(S)$ that are permutations on $2^S$. Furthermore, these 23424 are all even permutations. 
\end{theorem}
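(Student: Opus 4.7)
The plan is to exploit the characterization in Theorem~\ref{2305b}, which reduces membership in $\mathcal{M}(S)$ to verifying that $f$ is both union-subadditive and intersection-subadditive. Combined with the bijectivity requirement, the permutations in $\mathcal{M}(S)$ are precisely the elements of the set $A_{15}$ defined just before the theorem statement (note that $A_{15}$ conveniently packages both subadditivity conditions into the single inclusion $f(X\cup Y)\cup f(X\cap Y)\subseteq f(X)\cup f(Y)$). Hence the cardinality claim reduces to showing $\vert A_{15}\vert=23424$, and the parity claim reduces to checking the sign of every member of $A_{15}$.

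To enumerate $A_{15}$, I would implement the iterative refinement already suggested by the restriction property. Since every $f\in A_{n+1}$ restricts to an element of $A_n$, one builds $A_{n+1}$ from $A_n$ by trying, for each partial function in $A_n$, every still-unused value in $\{\mathbf{0},\mathbf{1},\dotsc,\mathbf{15}\}$ as a candidate image of the next input and retaining only those extensions that satisfy the additional union and intersection clauses introduced by the new index. This replaces the infeasible search over the $16!$ full bijections with a sequence of small local extensions whose running totals appear in the size table above, and terminates at $n=15$ with $\vert A_{15}\vert=23424$. For the parity assertion, I would compute the sign of each completed permutation from its cycle decomposition and verify that every one of the $23424$ signs is even.

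The main obstacle is essentially algorithmic rather than mathematical: correctness hinges on ensuring that the incremental check at stage $n+1$ enforces precisely the subadditivity clauses that become testable at that stage, namely those pairs $(X,Y)$ for which $X\cup Y$ or $X\cap Y$ equals the newly assigned index, neither more nor less. As a sanity check one should rerun the same code on a ternary alphabet and confirm that $\vert A_7\vert=408$, thereby recovering the first part of Theorem~\ref{1804a}. I would also search for a structural reason behind the all-even phenomenon, though none is apparent a~priori: the ternary case already contains the odd permutation $(\mathbf{2}\;\,\mathbf{7})$, so evenness appears to be genuinely a quarternary feature that must be read off from the enumeration.
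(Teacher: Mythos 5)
Your proposal is correct and follows essentially the same route as the paper: the paper also reduces membership in $\mathcal{M}(S)$ to the combined subadditivity inclusion via Theorem~\ref{2305b}, identifies the permutations in $\mathcal{M}(S)$ with the set $A_{15}$, enumerates it by exactly the incremental extension $A_n \to A_{n+1}$ you describe, and then reads off the count $23424$ and the all-even parity from the computer output. Your added sanity checks (recovering $\vert A_7\vert = 408$ for the ternary case, and noting that evenness cannot have a generic structural cause since $(\mathbf{2}\;\,\mathbf{7})$ is odd and lies in $\mathcal{M}(S)$ for $\vert S\vert=3$) are sensible refinements of the same computational argument.
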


The fact that no odd permutation on $2^S$ belongs to $\mathcal{M}(S)$ for a quarternary alphabet $S$ is surprising at first in view of Theorem~\ref{1804a}. Then it is plausible that the permutations on $2^S$ belonging to $\mathcal{M}(S)$ may generate under composition a proper subgroup of the alternating group on $2^S$. Interestingly, with some standard properties on symmetric groups and partial assistance from the computer,  we obtain the following result.

\begin{theorem}\label{1806c}
Suppose $S$ is a quarternary alphabet. Under composition, permutations on $2^S$ belonging to $\mathcal{M}(S)$ generate the alternating group on $2^S$.
\end{theorem}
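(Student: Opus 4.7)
By Theorem~\ref{0305a} every permutation of $2^S$ lying in $\mathcal{M}(S)$ is even, so the subgroup $G \leq \Sym(2^S)$ generated by these $23424$ permutations under composition is automatically contained in the alternating group $\mathrm{Alt}(2^S)$. The content of the statement is therefore the reverse inclusion $G \supseteq \mathrm{Alt}(2^S)$. The plan is to invoke Jordan's classical theorem: a primitive permutation group on a set $\Omega$ with $\vert\Omega\vert\geq 5$ that contains a 3-cycle must contain $\mathrm{Alt}(\Omega)$. With $\vert 2^S\vert = 16$ well inside the applicable range, it suffices to verify three things --- transitivity of $G$ on $2^S$, primitivity of the action, and the presence of at least one 3-cycle in $G$.

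For transitivity, the natural starting point is the subgroup $\mathcal{F}^P_U(S)\subseteq G$ supplied by Theorem~\ref{0906a} and Proposition~\ref{1806b}; it has order $48$, fixes $\emptyset$ and $S$, and already acts transitively on each cardinality-stratum of the fourteen nonempty proper subsets of $S$. To enlarge the action to all of $2^S$ I would adjoin a handful of explicit permutations in $\mathcal{M}(S)$ that move $\emptyset$ or $S$ --- natural candidates are double transpositions of type $(\emptyset\ X)(S\ X^c)$ in the spirit of the example preceding Theorem~\ref{1804a}, whose union- and intersection-subadditivity, and hence membership in $\mathcal{M}(S)$, can be checked directly via Theorem~\ref{2305b}. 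Combined with conjugation by elements of $\mathcal{F}^P_U(S)$, these quickly realize every subset of $S$ as an image of $\emptyset$, giving transitivity.

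Primitivity is the main obstacle, and the step for which the computer assistance promised in the excerpt is essential. Since $\mathcal{F}^P_U(S)$ preserves both the cardinality stratification and the complementation pairing $\{X, X^c\}$, every candidate nontrivial $G$-invariant block system must be compatible with this combinatorial structure. The plan is to enumerate by hand the resulting short list of candidate partitions (cardinality-based partitions, the complementation pairing, and their common refinements) and, for each, produce with computer search a concrete element of $\mathcal{M}(S)$ that fails to preserve it. Once primitivity is secured, a 3-cycle is extracted as a short word in the fixed generators whose cycle type the computer can verify on the spot. Jordan's theorem then delivers $G \supseteq \mathrm{Alt}(2^S)$, matching the opposite inclusion and completing the proof.
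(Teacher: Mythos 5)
Your high-level skeleton (Theorem~\ref{0305a} gives containment in the alternating group, then force the reverse inclusion) matches the paper's, but two of your concrete steps fail, and they are exactly the load-bearing ones. First, $\mathcal{F}^P_U(S)$ is \emph{not} a subgroup of $G$; in fact it contains no permutation of $2^S$ at all. Every element of $\mathcal{F}^P_U(S)$ is nondegenerate by definition, i.e.\ it sends \emph{both} $\emptyset$ and $S$ to $\emptyset$ (each of the twelve row vectors listed after Remark~\ref{1606a} begins and ends with $\mathbf{0}$). Such a map permutes the fourteen nonempty proper subsets but is not injective on $2^S$, so it does not ``fix $\emptyset$ and $S$'' and cannot occur among the generators of $G$, which Theorem~\ref{1806c} takes to be genuine permutations on $2^S$ belonging to $\mathcal{M}(S)$. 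This removes the starting point of your transitivity argument and simultaneously the premise of your primitivity analysis, which conditions block systems on a subgroup that is not there. Second, your proposed extra generators do not exist: no double transposition $h=(\emptyset\;X)(S\;X^c)$ lies in $\mathcal{M}(S)$. Indeed $h(\emptyset)=X\neq\emptyset$, and intersection-subadditivity would require $X=h(Y\cap Z)\subseteq h(Y)\cup h(Z)$ for \emph{every} disjoint pair $Y,Z$; taking $Y,Z$ to be singletons fixed by $h$ (for $X=\{s_0\}$ take $Y=\{s_1\}$, $Z=\{s_2\}$, so the requirement reads $\{s_0\}\subseteq\{s_1,s_2\}$; the cases $\vert X\vert=2,3$ fail the same way) gives a violation for every nonempty proper $X$, so Theorem~\ref{2305b} excludes these maps. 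The example you cite for inspiration, $(\mathbf{2}\;\mathbf{3})(\mathbf{6}\;\mathbf{7})$ over the ternary alphabet, has a genuinely different shape: it fixes $\emptyset$ and toggles $s_0$ in sets containing $s_1$. Permutations in $\mathcal{M}(S)$ that move $\emptyset$ do exist, but your family is not among them.

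With these two ingredients gone, neither transitivity nor primitivity is established, so Jordan's theorem cannot be invoked. The route is repairable in principle: the maps fixing $\emptyset$ and $S$ and acting as $\sigma\in\Sym(S)$ on the middle layer, and the maps $X\mapsto\sigma[S\backslash X]$ on all of $2^S$, are honest permutations in $\mathcal{M}(S)$ (both subadditivity conditions are immediate), giving a subgroup of $G$ of order $48$. But even then transitivity is not routine --- that subgroup together with the paper's generators $p=(\mathbf{8}\;\mathbf{11})(\mathbf{12}\;\mathbf{15})$ and $q=(\mathbf{4}\;\mathbf{13})(\mathbf{6}\;\mathbf{15})$ still preserves the splitting of $2^S$ into $\{\emptyset,S\}\cup\{\text{$2$-sets}\}$ and $\{\text{$1$-sets}\}\cup\{\text{$3$-sets}\}$, so further generators must be hunted down --- and primitivity would need its own verification. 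The paper sidesteps this entire layer: it exhibits $p,q\in\mathcal{M}(S)$ as above, computes that $p\circ q\circ p\circ q=(\mathbf{6}\;\mathbf{15}\;\mathbf{12})$ is a $3$-cycle in $G$, uses closure of $G$ under conjugation to get $(r(\mathbf{6})\;r(\mathbf{15})\;r(\mathbf{12}))\in G$ for all $r\in G$, and then verifies by computer that letting $r$ range over products of two generators already produces all $1120$ $3$-cycles on $2^S$, which are well known to generate the alternating group. In short: your parity step and your intent to find a $3$-cycle agree with the paper, but the transitivity and primitivity pillars of your argument rest on objects that are not in $G$ (or not in $\mathcal{M}(S)$), so as written the proof does not go through.
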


\begin{proof}
Let $\mathcal{G}$ denote the subgroup of $\Sym(2^S)$
generated by permutations on $2^S$ belonging to $\mathcal{M}(S)$. By Theorem~\ref{0305a}, $\mathcal{G}$ is a subgroup of the alternating group on $2^S$.
Consider $p= (\mathbf{8} \;\,\mathbf{11} )(\mathbf{12} \;\,\mathbf{15})$ and $q= (\mathbf{4} \;\,\mathbf{13})(\mathbf{6} \;\,\mathbf{15})$. Both permutations belong to $\mathcal{M}(S)$. 
The composition $p\circ q \circ p \circ q$   (again from right to left) is equal to $(\mathbf{6} \;\,\mathbf{15} \;\,\mathbf{12})$
and thus the latter $3$-cycle belongs to $\mathcal{G}$.\footnote{It can be verified computationally that no 3-cycle belongs to $\mathcal{M}(S)$.}
Note that if $r\in \mathcal{G}$, then $r^{-1}\in \mathcal{G}$ because
$r^{-1}=r^{o(r)-1}$, where $o(r)$ is the order of the permutation $r$,
and thus  $(r(\mathbf{6}) \;\,  r( \mathbf{15}) \;\,  r(\mathbf{11}))
=      r\circ (\mathbf{6} \;\,\mathbf{15} \;\,\mathbf{12})\circ r^{-1} \in \mathcal{G}$.
Using computer, when we take a sufficiently large subset $R$ of $\mathcal{G}$ (for example, $R$ can be the collection of permutations that are obtained from a composition
of two permutations belonging to $\mathcal{M}(S)$),
 all 1120 ($= \perm{16}{3}/3$) permutations on $2^S$ that are 3-cycles are included in $\{\,(r(\mathbf{6}) \;\,  r( \mathbf{15}) \;\,  r(\mathbf{11})) \mid r \in R      \,\}$. It is a well-known fact that the set of those 3-cycles forms a complete set of generators for the alternating group on $2^S$. Therefore, $\mathcal{G}$ is the  alternating group on $2^S$.
\end{proof}

\begin{corollary}
Suppose $S$ is a quarternary alphabet. 
The set of rs functions over $S$ belonging to $\mathcal{M}(S)$ does not constitute a complete set of generating functions under composition for the set of all rs functions over $S$.
\end{corollary}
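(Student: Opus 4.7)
The plan is to exhibit a single rs function over $S$ that provably lies outside the closure of $\mathcal{M}(S)$ under composition. The natural candidate is any \emph{odd} permutation $h$ on $2^S$ — the transposition $(\mathbf{0}\;\,\mathbf{1})$ is the simplest choice. Since every function $2^S\to 2^S$ is an rs function over $S$, this $h$ is certainly an rs function, and my goal is to show it cannot be expressed as a composition of elements of $\mathcal{M}(S)$.

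I would argue by contradiction. Suppose $h = g_1\circ g_2\circ\dotsb\circ g_k$ with each $g_i\in \mathcal{M}(S)$. The key structural observation is that on the finite set $2^S$, a composition of self-maps is a bijection only when each factor is itself a bijection: if some $g_i$ fails to be injective then the overall composition fails to be injective, contradicting the bijectivity of $h$, and on a finite set injectivity is equivalent to bijectivity. Hence every $g_i$ must be a permutation on $2^S$ belonging to $\mathcal{M}(S)$.

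At this point I would invoke Theorem~\ref{1806c}, which pins down precisely the subgroup of $\Sym(2^S)$ generated by these permutations as the alternating group on $2^S$. Consequently $h=g_1\circ\dotsb\circ g_k$ would be an even permutation, contradicting its choice as odd. This establishes the corollary.

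I do not anticipate any serious obstacle: the entire argument is carried by Theorem~\ref{1806c} together with the elementary observation that bijectivity of a composition on a finite set forces bijectivity of each factor. The only conceptual subtlety — and what makes the corollary substantive rather than trivial — is that one must rule out the possibility of a non-bijective factorisation of $h$; this is exactly what the finite-set pigeonhole argument handles.
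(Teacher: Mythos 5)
Your proposal is correct and is essentially identical to the paper's own proof: both argue that any factor of a permutation under composition must itself be a permutation (by finite-set counting), and then invoke Theorem~\ref{1806c} to conclude that only even permutations of $2^S$ can be generated, so an odd permutation such as $(\mathbf{0}\;\,\mathbf{1})$ witnesses incompleteness. The only difference is that you spell out the pigeonhole step that the paper leaves implicit.
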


\begin{proof}
If a composition of rs functions is a permutation, then each of the composing function must be a permutation. 
Hence, by Theorem~\ref{1806c}, no odd permutation of $2^S$ can be generated by rs functions over $S$ belonging to $\mathcal{M}(S)$.
\end{proof}

We end this paper with an open problem. In view of 
Theorem~\ref{1806c}, it is intriguing to know 
which subgroup of $\Sym(2^S)$ do the permutations on $2^S$ belonging to $\mathcal{M}(S)$ generate for higher alphabets $S$. An immediate simpler question is whether only even permutations are generated. 

\section*{Acknowledgement}

The author is indebted to distinguished Professor Arto Salomaa for his valuable comments regarding this paper.



\end{document}